\newcolumntype{x}[1]{>{\centering\arraybackslash\hspace{0pt}}p{#1}}
\newcolumntype{C}{>{\centering\arraybackslash}X}
\newcolumntype{M}{>{\hsize=.8\hsize\raggedright\arraybackslash}X}  %
\newcolumntype{D}{>{\hsize=1.6\hsize\raggedright\arraybackslash}X} %
\newtheorem{theorem}{Theorem}
\newtheorem{corollary}[theorem]{Corollary}
\newtheorem{proposition}[theorem]{Proposition}
\newtcolorbox{observationbox}{
    colback=green!3!white, %
    colframe=black, %
    coltitle=black,          %
    fonttitle=\bfseries,     %
    boxrule=1pt,             %
    arc=5mm,                 %
    breakable,               %
    enhanced,                %
}
\newcommand{\validators}{\mathcal{V}}
\newcommand{\validator}[1]{v_{#1}}
\newcommand{\contract}{\mathcal{SC}}
\newif\ifratcolor
  \colorlet{rat}{NavyBlue!90!black}
  \colorlet{rat}{black}
\newcommand{\RAT}[1]{\textcolor{rat}{#1}}
\title{Looking for Attention: Randomized Attention Test Design for Validator Monitoring in Optimistic Rollups}
\author{%
  Suhyeon Lee \\
  Tokamak Network \\
  \texttt{orion-alpha [at] korea.ac.kr}
  \And
  Yeongju Bak \\
  Tokamak Network \\
  \texttt{zena [at] tokamak.network}
}
\begin{document}
\maketitle

\begin{abstract}
Optimistic Rollups (ORUs) significantly enhance blockchain scalability but inherently suffer from the verifier's dilemma, particularly concerning validator attentiveness. Current systems lack mechanisms to proactively ensure validators are diligently monitoring L2 state transitions, creating a vulnerability where fraudulent states could be finalized. This paper introduces the Randomized Attention Test (RAT), a novel L1-based protocol designed to probabilistically challenge validators in ORUs, thereby verifying their liveness and computational readiness. Our game-theoretic analysis demonstrates that an Ideal Security Equilibrium, where all validators are attentive and proposers are honest, can be achieved with RAT. Notably, this equilibrium is attainable and stable with relatively low economic penalties (under \$1000) for non-responsive validators, a low attention test frequency (under 1\% per epoch), and a minimal operation overhead (monthly under \$30) with 10 validators. RAT thus provides a pivotal, practical mechanism to enforce validator diligence, fortifying the overall security and integrity of ORU systems with minimizing additional costs.
\end{abstract}

\keywords{attention test \and Ethereum \and economics \and layer 2 \and mechanism design \and optimistic rollup \and security}

\section{Introduction}
\label{section:introduction}
Layer 2 scaling solutions, particularly Optimistic Rollups (ORUs), have become crucial for enhancing blockchain throughput and reducing transaction costs while inheriting the security of the underlying Layer 1 (L1). ORUs achieve scalability by processing transactions off-chain and posting only essential data and state commitments to L1. Their security relies on the ``optimistic'' assumption that state transitions posted by the proposer are valid, backed by a challenge period during which any validator can submit a Fraud Proof to dispute invalid states.

However, the fundamental security of ORUs hinges critically on the assumption that there exists at least one honest and diligent validator actively monitoring the L1, downloading L2 data, re-executing transactions, and submitting Fraud Proofs when necessary. This assumption is potentially fragile due to the well-known verifier's dilemma \cite{luu2015demystifying}. Validators may lack direct incentives to perform the costly verification work, especially if the proposer is perceived as honest or if they believe other validators will bear the cost (free-riding). This dilemma is particularly pronounced in ORUs because, during normal operation where no fraud occurs, validators have minimal, if any, on-chain interactions. This lack of observable activity makes it difficult to ascertain whether validators are genuinely online and performing their verification duties, or merely appearing to be so. Consequently, a lazy but online validator, failing to perform actual verification, creates a significant security vulnerability, as a malicious proposer could potentially finalize fraudulent states unchallenged. Existing liveness checks (e.g., network pings) are insufficient as they do not ascertain a validator's capability or willingness to perform the core verification tasks.

While the concept of an ``attention test'' to proactively verify validator engagement has been discussed within the blockchain community for several years as a potential solution \cite{felten2019attention, park2021optimistic, mamageishvili2023incentive}, concrete protocol designs and analyses of their practical implications have been notably absent. To address this critical gap, we propose RAT (Randomized Attention Test), the first comprehensive attention test mechanism specifically designed for Optimistic Rollups. RAT introduces probabilistic, individually-targeted challenges integrated into the proposer's state commitment process. When triggered, a randomly selected validator is required to independently compute the state transition and engage in a scheme with the proposer within a strict time limit. Failure to respond correctly or timely results in direct economic penalties, thus providing individual economic incentives for validators to maintain operational attentiveness.

Our main contributions are as follows:
\begin{itemize}
    \item To the best of our knowledge, we are the first to detail a practical attention challenge mechanism, termed Randomized Attention Test (RAT), specifically designed to address the lazy validator problem in Optimistic Rollups.
    \item We present a comprehensive game-theoretic analysis demonstrating that RAT effectively incentivizes rational proposers and validators to converge towards an Ideal Security Equilibrium, where all validators remain attentive and proposers act honestly even under the weak randomness environment.
    \item We present a proof-of-concept implementation using the Optimism Stack, a representative rollup codebase, demonstrating that RAT is deployable with minimal costs.
    \item Our findings indicate that this Ideal Security Equilibrium is attainable with RAT under realistic conditions, imposing only minimal operational overhead on the ORU system, primarily through modest, probabilistically applied economic penalties and low challenge frequencies.
\end{itemize}

The remainder of this paper is structured as follows. Section \ref{sec:related_works} briefly reviews the related works. Section \ref{sec:problem} defines the problem scope. Section \ref{sec:protocol} introduces the design of the RAT protocol, a novel attention test mechanism. Section \ref{sec:game_model} explains the game model of validators and proposer strategies in RAT. Based on the model, Section \ref{sec:equilibrium_analysis} analyzes the conditions and robustness of the ideal strategic equilibrium by RAT. Section \ref{sec:design_ideal_security} investigates implications of the parameter setting and practicality for RAT based on the game theoretic results. 
Section \ref{sec:implementation} provides a PoC implementation and the cost analysis.
Section \ref{sec:discussion} discusses limitations in modeling, and Section \ref{sec:conclusion} concludes the paper.

\section{Related Works}
\label{sec:related_works}

The development of optimistic rollups has spurred a wide array of research aimed at enhancing their economic viability, security, and operational scalability. This section reviews key areas of this prior work, highlighting common assumptions and identifying a nuanced challenge related to proactive validator oversight, which our present study seeks to address.

\subsection{Economic Analysis in Rollups}
Early works in the economic analysis of rollups have focused on establishing robust security guarantees through well-aligned incentive mechanisms. Tas et al.~\cite{tas2022accountable} propose a framework for \emph{accountable safety} in rollups, which emphasizes the need for designs that hold participants economically accountable for misbehavior. Li~\cite{li2023security} further explores the security of optimistic blockchain mechanisms, highlighting the importance of ensuring that validator incentives are strong enough to deter malicious actions. In a similar vein, Mamageishvili and Felten~\cite{mamageishvili2023incentive} analyzes incentive schemes for rollup validators, including a discussion of strategies under a conceptual attention test. These foundational studies establish the importance of economic incentives for honest participation and for penalizing malicious behavior. However, a subtle aspect of the verifier's dilemma persists: even with robust penalties for detected fraud, the incentive for an individual validator to incur the cost of active verification during periods of perceived sequencer honesty can be weak, especially if they assume others will bear this cost. This raises questions about the consistent, proactive engagement of validators, a prerequisite for these incentive systems to function effectively when fraud does occur.

\subsection{Data Availability and Its Cost}
Given that Optimistic Rollups are designed as a layer-2 scaling solution, minimizing the costs associated with data availability (DA) is a critical concern. Several studies have tackled this problem from different angles. Palakkal et al.~\cite{Palakkal2024sok} provide a systematization of compression techniques in rollups, outlining methods and inefficiencies in some rollups' practice. Mamageishvili and Felten~\cite{mamageishvili2023efficient} propose efficient rollup batch posting strategies on the base layer using call data, while Crapis et al.~\cite{crapis2023eip} offer an in-depth analysis of EIP-4844 economics and rollup strategies, including blob cost sharing. Complementary empirical investigations by Heimbach and Milionis~\cite{Heimbach2025}, Huang et al.~\cite{huang2024two}, and Park et al.~\cite{park2024impact} further document the inefficiencies in current DA cost structures and examine their impact on rollup transaction dynamics and consensus security. Lee~\cite{lee2024180} investigates blob sharing as a potential remedy for the dilemma faced by small rollups in the post-EIP-4844 era. These advancements are crucial for ensuring that validators \textit{can} access the necessary data affordably and efficiently. Yet, the availability and affordability of data do not inherently guarantee that every validator will consistently download, process, and verify it, especially if there are no direct, periodic checks on their actual engagement with this data beyond dispute scenarios.

\subsection{Fraud Proof and Security Protocols}
Prior research on fraud proofs in Optimistic Rollups has primarily focused on ensuring rapid dispute resolution and robust liveness of the dispute game itself. For example, BoLD~\cite{alvarez2024bold} and Dave~\cite{nehab2024dave} propose protocols that optimize the dispute game to achieve low delays and cost-efficient on-chain verification. Concurrently, Berger et al.~\cite{berger2025economic} analyze economic censorship dynamics to guarantee that disputes are resolved even under adversarial conditions. While these protocols are vital for the \textit{reactive} security of ORUs by providing mechanisms to challenge and penalize identified fraudulent state transitions, some analyses, such as Lee~\cite{lee2025wtsc}, point to potential exploitabilities in current validator incentive structures by malicious proposers.

Sheng et al.~\cite{shengPoD2024AFT} proposed a Proof-of-Diligence (PoD) protocol which provides cryptoeconomic security for rollup validators. PoD operates a watchtower network in which watchtowers recompute state transitions and, when selected by the VRF, submit a bounty (PoD) transaction. Challenges are raised if an incorrect state root is detected. It leverages a Verifiable Random Function (VRF) to pseudo-randomly select bounty recipients, embedding this into a game-theoretic incentive design. Under normal operation, each state epoch yields \emph{at most one} L1 bounty (PoD) transaction (380,000 gas). 
At recent Ethereum gas and price levels (see Section~\ref{sec:implementation}), a PoD transaction costs approximately \$2.4. When assuminng a 10-minute epoch, this can translate to a monthly on-chain cost of up to \$10k. Importantly, PoD's cost is controlled by the system parameter $\theta$. However, the original paper offers no detailed guidance on how to tune this parameter for different security needs. Using the only provided exmaple, $\theta=0.9$, the expected on-chain cost would be around \$9,000 per month.
On the other hand, our work, RAT, leverages the existing validators without deploying an additional network. When a rollup proposer posts a new state root, the contract probabilistically triggers an attention test targeting a validator. The selected validator must submit a test solution, which can be verified on-chain at low cost. This mechanism makes validator liveness externally observable and allows us to tune the trigger probability to keep the cost overhead minimal. With a 10-minute epoch and 10 validators setting, the on-chain cost is under \$23 per month. Table \ref{tab:rat-vs-pod-simple} provides a side-by-side comparison of RAT and PoD.

\begin{table}[t]
\centering
\small
\setlength{\tabcolsep}{6pt}
\begin{tabularx}{\textwidth}{lXX}
\toprule
\textbf{Metric} & \textbf{Randomized Attention Test} & \textbf{Proof-of-Diligence \cite{shengPoD2024AFT}} \\
\midrule
Infrastructure & No extra network & Watcher network required \\
Submitted artifact & Minimal hashes to rebuild posted state root & Recomputed state/trace Merkle \\
Contract-side verification & Yes (hash recompute \& compare in SC) & No (off-chain cross-check) \\
On-chain cost\footnotemark[1] & $\approx$ 23 \$/month & $\leq$ 10,000 \$/month \\
Liveness observability & Directly on-chain & Indirect, only visible upon misbehavior \\
Limitation & No guarantee of full execution & No on-chain verification \\
\bottomrule
\end{tabularx}
\caption{RAT vs.\ PoD}
\label{tab:rat-vs-pod-simple}
\end{table}

\footnotetext[1]{Monthly on-chain cost estimates use a gas price of 2.3 gwei, an ETH/USD rate of \$2,753, and a 10-minute epoch period (see Section~\ref{sec:implementation}). For RAT, we assume 10 validators and a 1\% trigger probability. For PoD, we assume one claim per epoch, with 380k gas per claim \cite{shengPoD2024AFT}.}

\section{Problem Statement and Design Goals}
\label{sec:problem}

This section first outlines the core problems within Optimistic Rollups that necessitate the development of an attention test mechanism. Subsequently, we establish key design goals that such a mechanism should satisfy, which will serve as an evaluative framework for the solution proposed in this paper.

\subsection{The Validator Liveness and Correctness Problem}
\label{subsec:validator_problem}
The security model of Optimistic Rollups critically depends on the assumption that there exists at least one honest Validator who is actively monitoring the L1, downloading L2 data, executing state transitions, and submitting Fraud Proofs when necessary. However, several challenges arise:

\begin{itemize}
    \item \textbf{The \textbf{Lazy} Validator Problem:} A validator might remain online and appear active but fail to perform the computationally intensive task of re-executing state transitions. Such a validator would not detect invalid state roots posted by a malicious proposer, undermining the rollup's security.
    \item \textbf{Limitations of Simple Liveness Checks:} Basic liveness checks (like monitoring network connectivity) are insufficient as they do not verify if the validator possesses the necessary data or has performed the required computations.
    \item \textbf{Reactive Nature of Fraud Proofs:} The Fraud Proof mechanism is inherently reactive. It only comes into play \textbf{after} an invalid state has been posted. There is no built-in mechanism to proactively ensure that validators are continuously ready and capable of performing their verification duties \textbf{before} a potential fraud occurs.
\end{itemize}

Therefore, a mechanism is needed to proactively and periodically verify that Validators are not only online (live) but also correctly processing the L2 state (correctness) and have access to the necessary data (data availability). We term such a mechanism an \textit{Attention Test}.

\subsection{Design Goals for Attention Tests}
\label{subsec:design_goals}
An effective attention test mechanism for ORU validators should possess the following properties:

\begin{enumerate}
    \item \textbf{Verifiability:} The test should check the validator's ability to perform core validation tasks (like state computation), and the results must be objectively verifiable on L1.
    \item \textbf{Unpredictability:} The timing of the test and the selection of the challenged Validator should be unpredictable to prevent gaming the system.
    \item \textbf{Fairness:} All active validators should have a chance of being selected for a test, without bias towards specific individuals or groups.
    \item \textbf{Efficiency:} The mechanism should incur low overhead in terms of L1 gas costs and latency added to the normal rollup operation. (This is often a trade-off against security).
\end{enumerate}

The subsequent sections will introduce the RAT (Randomized Attention Test) protocol, designed to address the verifier's dilemma and validator liveness challenges outlined previously. We will then detail its operation and analyze how it fulfills the design goals established.

\section{The RAT Protocol}
\label{sec:protocol}

To address the challenges outlined in Section \ref{sec:problem}, specifically the verifier's dilemma and the need for proactive validator monitoring, we propose the \textbf{R}andomized \textbf{A}ttention \textbf{T}est (RAT) protocol. RAT integrates a probabilistic challenge mechanism into the routine state commitment process of the optimistic rollup, leveraging the existing structure on the L1 blockchain to minimize additional costs.

\subsection{System Model and Assumptions}
\label{subsec:system_model}

This study operates within a standard ORU framework, which comprises the following key entities and components:

\begin{itemize}
    \item An L1 Blockchain (e.g., Ethereum): It hosts the main ORU smart contract $\contract$. This contract handles state root submissions, dispute resolution logic, staking, and the RAT protocol execution.
    \item Optimistic rollup (L2): It processes transactions off-chain and periodically posts state commitments and transaction data (for data availability) to the L1 blockchain.
    \item Proposer (P): A single, designated entity responsible for ordering L2 transactions, computing the resulting state transitions, and submitting the corresponding state commitments to the L1 contract.
    \item Registered Validators ($\validators$): A set of $N$ registered validators $\validators = \{\validator{1}, \validator{2}, ..., \validator{N}\}$, each identified by an L1 address and required to stake enough collateral managed by the L1 contract. Instead of receiving a continuous validator fee, these validators are subject to a duty-of-attention test. For permissionless security, not only this registered set but also any validator can submit a fraud proof against a posted state root.
\end{itemize}

In addition to the system model, the protocol relies on the following standard security and operational assumptions:

\begin{itemize}
    \item \textbf{L1 Security:} The underlying L1 blockchain provides security guarantees, including finality for confirmed transactions and resistance to censorship.
    \item \textbf{Cryptographic Primitives:} The cryptographic hash function $H(\cdot)$ (e.g., Keccak-256) used is collision-resistant and behaves like a random oracle for unpredictability purposes.
    \item \textbf{Network Model (Partial Synchrony):} The network ensures that messages broadcast by honest parties are eventually delivered to other honest parties within a finite, bounded time, although the exact bound may not be known a priori or may fluctuate. 
    \item \textbf{Homogeneous Validators and Linear Costs:} The validators are homogeneous in terms of risk preference and face a common, linear cost for operation. This cost largely reflects metered, usage-based computational expenses (e.g., cloud infrastructure) proportional to their verification activities. \label{assumption:validators}
    \item \textbf{Dispute Game Security:} The dispute game of ORU systems are safe and live. If a validator disputes an incorrect state commitment by a proposer, the validator wins the dispute game. If a validator disputes a correct state commitment by a proposer, the proposer wins the dispute game.
\end{itemize}

\begin{figure}[htb]
    \centering
    \includegraphics[width=0.7\linewidth]{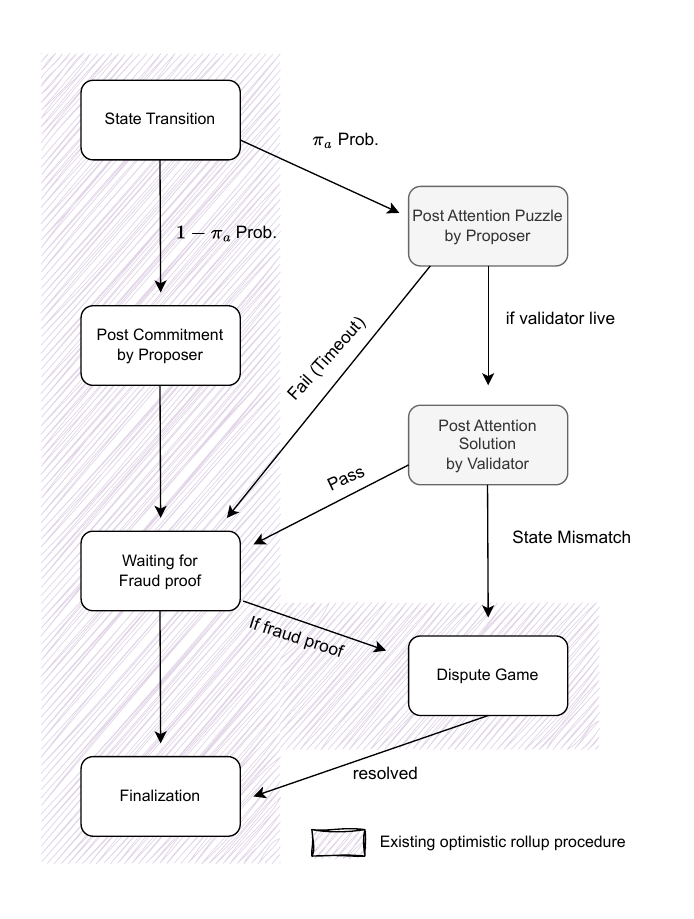}
    \caption{Overview of RAT Integrated with Optimistic Rollup Procedure}
    \label{fig:overview}
\end{figure}

\subsection{Protocol Flow and Challenge Trigger}
\label{subsec:overall_flow}

The RAT protocol introduces a probabilistic verification step into the standard optimistic rollup procedure (Commitment $\rightarrow$ Dispute window $\rightarrow$ Finalization) for state commitments. It aims to ensure validator attentiveness without significantly altering the core ORU dispute window. 

Figure~\ref{fig:overview} illustrates this integrated flow. Deviating from the standard Optimistic Rollup procedure, following an L2 state transition, the proposer interacts with the L1 smart contract. With probability $1 - \pi_a$, the proposer posts its state commitment directly. Conversely, with probability $\pi_a$, the proposer initiates an attention test by submitting an ``attention puzzle'' to the L1 smart contract. The design of the attention puzzle requires to two key properties:

\begin{itemize}
    \item \textbf{Commitment Binding:} The attention puzzle must be uniquely and verifiably bound to the specific state commitment ($\sigma_P$) being attested in that epoch. This ensures the test is relevant to the current state transition. This property is key to ensuring that the ORU's dispute window does not need to be extended for the attention test.
    \item \textbf{Knowledge Proof:} Only a validator possessing the correct L2 state transition information should be able to derive and submit the valid solution to the puzzle.
\end{itemize}

If an attention test is triggered, a target validator is selected by $\contract$ from the registered validator pool. Based on the target validator's response (or lack thereof) within a predefined time window, there are three possible outcomes:

\begin{itemize}
    \item \textbf{\textsc{Pass}:} The validator submits a solution, which the contract verifies against the proposer’s attention test. Then, the standard ORU procedure continues, awaiting potential fraud proofs during the remainder of the dispute window.
    
    \item \textbf{\textsc{Timeout} (Fail):} The validator does not submit any solution within the attention test timing window. The validator gets a specific penalty to the deposit. Then, the standard ORU procedure continues, awaiting potential fraud proofs during the remainder of the dispute window.

    \item \textbf{State Mismatch:} If the state transition as computed by the validator does not match the L2 state ($\sigma_P$) posted by the proposer, a dispute challenge against the posted transition must be initiated.\footnote{We intentionally do not specify the challenger of the dispute. This is because another validator---not the one targeted by the attention test---may initiate the dispute before the designated validator reacts. Once the dispute game is initiated, it serves as sufficient evidence of a state mismatch.}

\end{itemize}

The subsequent subsection will detail the specific mechanisms for implementing this RAT protocol flow and ensuring the aforementioned properties.

\subsection{Attention Test Mechanism}
\label{sec:attention_test_mechanism}

Our mechanism is designed to be integrated into existing ORU protocols with minimal manipulation to their core logic. A key aspect of this efficiency is leveraging the proposer's standard operational flow. In typical ORUs, the proposer periodically submits a state commitment with the following L2 block number to the L1 smart contract ($\mathcal{SC}$). This commitment is usually the state root ($\sigma_P$) of a Merkle tree that compactly represents the entire L2 state. Being a root hash, $\sigma_P$ itself does not reveal the underlying L2 data but attests to it.

The RAT mechanism utilizes this existing state commitment process. The proposer always submits its computed state root $\sigma_P$ along with associated metadata like the corresponding L2 block number (\texttt{L2blocknum}) to $\mathcal{SC}$, irrespective of whether an attention test is triggered for that epoch. Then, $\mathcal{SC}$ determines if an attention test should be triggered for this state transition. If a test is triggered, the smart contract determines which validator to be tested. Both decisions rely on the current block's timestamp and the previous block's hash for randomization\footnote{L1 blockhash and timestamp are susceptible to manipulation by a L1 block builder. However, as Observation 2 in Section \ref{subsec:robustness_exploited_randomness}, manipulating this randomness to avoid triggering attention tests for fraud does not affect the intended security of the protocol.}.

If an attention test is triggered, the ``attention puzzle'' posed to the target validator is not to re-submit the entire state root (as the proposer has already submitted $\sigma_P$). Instead, to prove attentiveness more granularly yet efficiently, the target validator is required to submit the two direct child components, left and right child nodes/hashes, denoted $(L_V, R_V)$, that constitute the state root $\sigma_P$. The two values can be derived by the transaction execution. That is, if $H_{\text{tree}}(L_V, R_V) = \sigma_P$, then the solution is valid. Only a validator that has correctly processed the L2 state transitions leading to $\sigma_P$ would know these immediate constituent components. This two-component solution forms the core of our RAT interaction, depicted in Figure~\ref{fig:interaction}. \\

\textbf{Phase 1: Proposer's State Commitment and Potential Attention Test Trigger by $\mathcal{SC}$}
The Proposer ($P$) computes the L2 state root $\sigma_P$ and the associated \texttt{L2blocknum} for the current epoch. $P$ then sends an L1 transaction to $\mathcal{SC}$ containing $\sigma_P$ and \texttt{L2blocknum}. This submission is identical to the standard ORU procedure where $P$ posts its state commitment.

Upon receiving this transaction, $\mathcal{SC}$ first records $\sigma_P$ and \texttt{L2blocknum}. Then, $\mathcal{SC}$ uses $\sigma_P$ and a source of on-chain data to:
\begin{enumerate}
    \item Probabilistically decide if an attention test is triggered with target probability $\pi_a$. 
    \item If triggered, probabilistically select a target validator $\validator{i}$ from the registered validator set. 
\end{enumerate}
If an attention test is triggered for a certain validator, $\mathcal{SC}$ emits an attention event. This event contains the Proposer's submitted state root $\sigma_P$, \texttt{L2blocknum}, and the target validator's address. This event signals the validator to respond and initiates a response timer, $T_{\text{response}}$. If no test is triggered, the process continues as a standard ORU state commitment. \\

\textbf{Phase 2: Target Validator's Attention Solution Submission}
Upon recognizing the attention event specifically targeting it, the target validator should act. Its core task, which it should have already performed if diligent, is to independently process all L2 transactions up to \texttt{L2blocknum} to derive the state root and, crucially for this test, the two direct child components $(L_{V}, R_{V})$ that hash to form the Proposer's submitted $\sigma_P$.

Within the $T_{\text{response}}$ window, the validator must submit its computed $(L_{V}, R_{V})$ as the ``attention solution'' to $\mathcal{SC}$ via an L1 transaction. Upon receiving this solution, $\mathcal{SC}$ performs the verification: it computes $H(L_{V}, R_{V})$ using the L2-specific tree hashing function and compares the result against the $\sigma_P$ previously submitted by the proposer and recorded by $\mathcal{SC}$ for this test. This comparison, along with timeout handling, determines the outcome of the Attention Test (Pass, Fail-Timeout, or State Mismatch).

In this structure, we can expect that the L1 gas cost will not be significant for the successful attention test outcome (Pass). It requires only on additional L1 transaction with two hash strings, one hash calculation, and one comparison of two hash strings.

\begin{figure}[htb]
    \centering
    \includegraphics[width=0.9\linewidth]{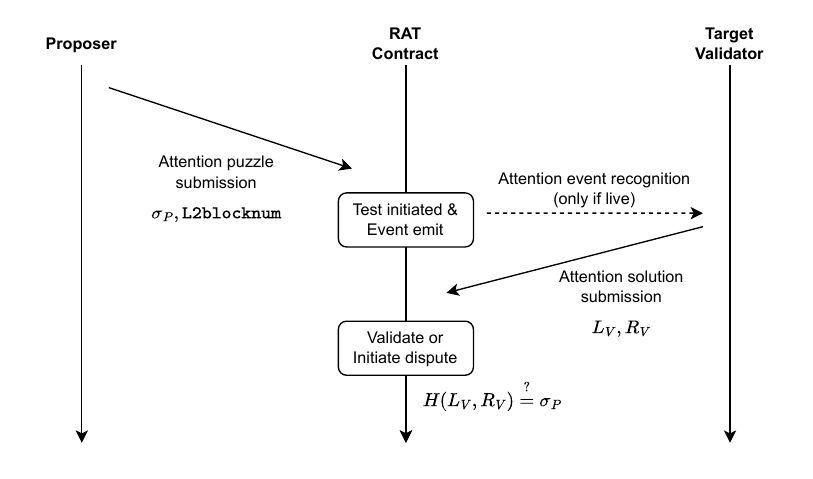}
    \caption{Interaction flow for an attention test}
    \label{fig:interaction}
\end{figure}

\section{Game Model}
\label{sec:game_model}

In this section, we develop a game-theoretic model, a game $\mathcal{G}$, before analyzing the strategic interactions by the RAT protocol.

\subsection{Players and Strategies}
\label{sec:players_strategies}

The game $\mathcal{G}$ involves two strategic player types:
\begin{itemize}
    \item \textbf{Proposer (P):} This player is responsible for submitting L2 state commitments to the L1 contract. The proposer chooses a probability $\pi_p \in [0,1]$ of submitting an honest state root (action \textsf{H}). Consequently, the probability of submitting a fraudulent state root (action \textsf{F}) is $1-\pi_p$.
    \item \textbf{Representative Validator (V):} This player represents one of the $N$ homogeneous validators. Each validator $i$ chooses a probability $\pi_{v,i} \in [0,1]$ of being online and actively verifying state transitions during a given period. The probability of being offline (not verifying) is $1-\pi_{v,i}$. Due to the homogeneity assumption in Section \ref{assumption:validators}, we will often focus on a symmetric strategy profile where $\pi_{v,i} = \pi_v$ for all $i$.
\end{itemize}

\subsection{Key Parameters}
\label{sec:parameters_assumptions}

The economic environment of game $\mathcal{G}$ is defined by the following parameters and understanding. A parameter written in uppercase signifies a large value.

\begin{itemize}
    \item \textbf{Validator's Payoffs and Costs:}
    \begin{itemize}
        \item $f_v$: Standard fee received by a validator for its validation activities unless the system fails to detect a fraud state commitment.
        \item $c_m$: Marginal cost incurred by a validator \textit{only when it is online} and actively verifying.
        \item $c_v$: Operating cost incurred by a validator who chose $\pi_v$, and $c_v = \pi_v c_m$.
        \item $r_v$: Total reward pool available if a fraudulent state root submitted by the proposer is successfully challenged.
        \item $r_v^*$: The expected share of $r_v$ for an individual attentive validator that successfully challenges a fraud. We define $r_v^* = \frac{r_v}{1 + (N-1)\bar{\pi}_v}$, where $\bar{\pi}_v$ is the average probability that any other given validator in the system is online. In a symmetric equilibrium, $\bar{\pi}_v = \pi_v$.
        \item $c_{\text{off}}$: Penalty incurred by a validator if it is selected for an Attention Test and fails to respond correctly (i.e., is offline or unresponsive).
        \item $C_{\text{fail}}$: Penalty incurred by each validator if a proposer's fraud goes undetected and leads to a system failure.
    \end{itemize}
    \item \textbf{Proposer's Payoffs and Costs:}
    \begin{itemize}
        \item $f_p$: Standard fee received by the proposer for submitting an honest state root.
        \item $C_{\text{fraud}}$: Penalty incurred by the proposer if its fraudulent submission is detected and successfully challenged.
        \item $R_{\text{fraud}}$: Illicit profit gained by the proposer if its fraudulent submission goes undetected and is finalized.
    \end{itemize}
    \item \textbf{System Parameters:}
    \begin{itemize}
        \item $N$: The total number of registered validators in the system.
        \item $\pi_a$: The intended probability that an Attention Test is triggered for a given state submission. If triggered, one validator is chosen uniformly at random to be tested, so an individual validator is tested with probability $\pi_a/N$.
    \end{itemize}

\end{itemize}

\subsection{Payoff Structure and Utility Calculation Preliminaries}

\label{sec:payoff_preliminaries}
The payoffs for each player depend on their own actions, the actions of the other player(s), and the outcomes of probabilistic events like attention tests and fraud detection. We will use a detailed state-dependent payoff structure as Table~\ref{tab:payoff_ptructure} to derive the expected utilities. The core interactions determining payoffs include:

\begin{itemize}

    \item \textbf{Fraud Detection:} An online validator is assumed to always detect a fraudulent submission (\textsf{F}) by the proposer. An offline validator never detects fraud.

    \item \textbf{Attention Test Outcome:} If an attention test is triggered ($\pi_a$) and a specific validator is challenged (probability $1/N$ for that validator):

    \begin{itemize}
        \item If the challenged validator is online, it passes; no direct reward\footnote{We can consider $f_v$ as an indirect reward for validation.} or penalty from the test itself beyond normal operation.
        \item If the challenged validator is offline, it incurs penalty $c_{\text{off}}$ by timeout.
    \end{itemize}

    \item \textbf{Consequences of Proposer's Fraud:}
    \begin{itemize}
        \item If fraud is detected (e.g., by at least one online validator): The proposer incurs $C_{\text{fraud}}$. The successful challenger(s) share $r_v$ (leading to $r_v^*$ for an individual).
        \item If fraud is not detected: The proposer gains $R_{\text{fraud}}$. All validators (or at least those implicated by the system failure rules) may incur $C_{\text{fail}}$.
    \end{itemize}
\end{itemize}

Based on these interactions and the payoff table, we can calculate the expected utility for each player.

\begin{table}[tb]
\centering
\caption{State-Dependent Payoff Structure for Validator and Proposer}
\label{tab:payoff_ptructure}
\begin{adjustbox}{width=\textwidth, center} %
\begin{tabular}{ccccc}
\toprule
\multicolumn{3}{c}{\textbf{Validator Network State}} & \multicolumn{2}{c}{\textbf{Proposer Action}} \\
\cmidrule(lr){1-3} \cmidrule(lr){4-5}
Network Status & Validator Status & Attention Test & Honest (\textsf{H}) & Fraud (\textsf{F}) \\
\midrule
\multirow{4}{*}{Online} & \multirow{2}{*}{Online (\textsf{On})} & \textsc{Pass} & ($f_v-c_v$, $f_p$) & ($f_v-c_v+r^{*}_v$, $-C_{\text{fraud}}$) \\
& & \textsc{No test} & ($f_v-c_v$, $f_p$) & ($f_v-c_v+r^{*}_v$, $-C_{\text{fraud}}$) \\
\cmidrule(lr){2-5}
& \multirow{2}{*}{Offline (\textsf{Off})} & \textsc{Timeout} & ($f_v-c_v-c_{\text{off}}$, $f_p$) & ($f_v-c_v-c_{\text{off}}$, $-C_{\text{fraud}}$) \\
& & \textsc{No test} & ($f_v-c_v$, $f_p$) & ($f_v-c_v$, $-C_{\text{fraud}}$) \\
\midrule
\multirow{2}{*}{Offline} & \multirow{2}{*}{Offline (\textsf{Off})} & \textsc{Timeout} & ($f_v-c_v-c_{\text{off}}$, $f_p$) & ($-c_{\text{off}}-C_{\text{fail}}$, $f_p+R_{\text{fraud}}$) \\
& & \textsc{No test} & ($f_v-c_v$, $f_p$) & ($-C_{\text{fail}}$, $f_p+R_{\text{fraud}}$) \\
\bottomrule
\end{tabular}
\end{adjustbox}
\end{table}

\section{Equilibrium Analysis of the RAT Mechanism}
\label{sec:equilibrium_analysis}

This section analyzes the strategic interactions between the proposer and validators within the RAT-enhanced Optimistic Rollup framework. We first derive the expected utility functions for each player and then identify the conditions under which an Ideal Security Equilibrium -- where all validators are attentive and the proposer is honest -- can be achieved and maintained. We also examine the robustness of this equilibrium to potential exploitation of the weak randomness in RAT.

\subsection{Expected Utility Functions}
\label{subsec:expected_utility_functions}

Firstly, Proposition \ref{prop:general_symmetry} and Corollary~\ref{cor:rat_symmetry} allows us to focus on symmetric Nash Equilibria, where $\pi_{v,i} = \pi_v$ for all validators $i$, and thus the expected online probability of other validators $\bar{\pi}_v$ can be set to $\pi_v$.

\begin{proposition}[Symmetry in Games with Monotonic Incentives]
\label{prop:general_symmetry}
In a game with a set of homogeneous players, if the net incentive for any player to take a specific action is a strictly monotonic function of the aggregate number of other players taking that action, then any Nash Equilibrium of the game must be symmetric.
\end{proposition}

\begin{proof}
Let the game consist of a set $V$ of homogeneous players. Let $\Delta(k)$ be a player's net incentive to take an action, where $k$ is the number of other players taking that action. Assume $\Delta(k)$ is strictly monotonic (e.g., strictly decreasing).

Assume, for the sake of contradiction, that an asymmetric Nash Equilibrium strategy profile $\boldsymbol{\pi}^*$ exists. This implies there are at least two players, $i$ and $j$, with different strategies $\pi^*_{i} > \pi^*_{j}$. Let $\pi_k$ be the probability of player $k$ taking the action.

The best-response conditions require that if $\pi^*_{k} > 0$, then the incentive must be non-negative, and if $\pi^*_{k} < 1$, the incentive must be non-positive. Our assumption $\pi^*_{i} > \pi^*_{j}$ thus implies:
1. The incentive for player $i$, $\Delta(\pi^*_{v,-i})$, must be $\geq 0$.
2. The incentive for player $j$, $\Delta(\pi^*_{v,-j})$, must be $\leq 0$.

However, since players are homogeneous and $\pi^*_{i} > \pi^*_{j}$, the aggregate action probability faced by player $j$ is strictly greater than that faced by $i$. Due to the strict monotonicity of $\Delta$, this implies:
3. $\Delta(\pi^*_{v,-j}) < \Delta(\pi^*_{v,-i})$.

Combining these conditions yields the contradiction $\Delta(\pi^*_{v,-j}) \leq 0 \leq \Delta(\pi^*_{v,-i})$ and the strict inequality $\Delta(\pi^*_{v,-j}) < \Delta(\pi^*_{v,-i})$. This is a logical impossibility. Thus, no asymmetric equilibrium can exist.
\end{proof}

\begin{corollary}[Symmetry in the RAT Game]
\label{cor:rat_symmetry}
The RAT game $G$, played by homogeneous validators, satisfies the conditions of Proposition~\ref{prop:general_symmetry}. Therefore, any Nash Equilibrium of $G$ must be symmetric.
\end{corollary}

\begin{proof}
The players (validators) in game $G$ are defined as homogeneous. The net incentive for a validator to be \textsf{Online}, denoted $\Delta$, is a function of the aggregate \textsf{Online} probability of other validators. As established in the main text, this incentive is strictly decreasing because a higher participation rate from others diminishes an individual's expected fraud-detection reward while also reducing the risk associated with being \textsf{Offline}. Since the premises of Proposition~\ref{prop:general_symmetry} hold, the conclusion follows directly.
\end{proof}

\subsubsection{Validator's Expected Utility}
\label{subsubsec:validator_utility}

The expected utility for a representative validator depends on their own strategy (Online or Offline), the proposer's strategy (Honest or Fraudulent, with probability $\pi_p$ of being honest), and the aggregate behavior of other validators (with probability $\pi_v$ of being online, due to Proposition~\ref{prop:general_symmetry}).

\begin{itemize}
    \item \textbf{If the validator chooses to be Online (\textsf{On}):} They incur cost $c_v = c_m \pi_v = c_m \cdot 1 = c_m$. If the proposer is fraudulent (probability $1-\pi_p$), they receive an expected reward share $r_v^* = r_v / (1 + (N-1)\pi_v)$ for detecting fraud.
    \begin{equation}
        U_v(\textsf{On} | \pi_p, \pi_v) = f_v - c_m + (1 - \pi_p) r_v^*
        \label{eq:Uv_Online}
    \end{equation}

    \item \textbf{If the validator chooses to be Offline (\textsf{Off}):} They do not incur any operational cost ($c_v = c_m \pi_v = c_m \cdot 0 = 0$). They face a probability $\pi_a/N$ of being selected for an attention test and incur the slashing penalty $S$ if they fail the test by \textsc{timeout}, regardless of the proposer’s action in this baseline model. If the proposer is fraudulent and the fraud is not detected by any other validator (probability $(1-\pi_v)^{N-1}$)\footnote{This independence-based estimate can differ in real-world deployments if validators’ online/offline choices are correlated (e.g., due to colocation or spot pricing). In that case, $(1-\pi_v)^{N-1}$ should be read as a baseline (best-case) approximation.}, they incur penalty $C_{\text{fail}}$.

    \begin{equation}
        U_v(\textsf{Off} | \pi_p, \pi_v) = \pi_p f_v - \frac{\pi_a}{N} c_{\text{off}} - (1-\pi_p)(1-\pi_v)^{N-1}C_{\text{fail}}
    \end{equation} \label{eq:Uv_Offline_general}
\end{itemize}
The overall expected utility for a validator choosing to be online with probability $\pi_v$ (their own choice) is:
\begin{equation}
E[U_v](\pi_p, \pi_v) = \pi_v U_v(\textsf{On} | \pi_p, \pi_v) + (1 - \pi_v) U_v(\textsf{Off} | \pi_p, \pi_v)
\label{eq:Ev_general}
\end{equation}

\subsubsection{Proposer's Expected Utility}
\label{subsubsec:proposer_utility}

The proposer's expected utility depends on their strategy (Honest or Fraudulent) and the validators' collective online probability ($\pi_v$).

\begin{itemize}
    \item \textbf{If the proposer chooses to be Honest (\textsf{H}):}
    \begin{equation}
        U_p(\textsf{H}) = f_p
        \label{eq:Up_Honest}
    \end{equation}
    \item \textbf{If the proposer chooses to be Fraudulent \textsf{F}:}
    \begin{equation}
        U_p(\textsf{F} | \pi_v) = (1 - (1-\pi_v)^N)(-C_{\text{fraud}}) + (1-\pi_v)^N(f_p + R_{\text{fraud}})
        \label{eq:Up_Fraud}
    \end{equation}
\end{itemize}
The overall expected utility for a proposer choosing to be honest with probability $\pi_p$ is:
\begin{equation}
E[U_p](\pi_p, \pi_v) = \pi_p U_p(\textsf{H}) + (1 - \pi_p) U_p(\textsf{F} | \pi_v)
\label{eq:Ep_general}
\end{equation}

\subsection{Ideal Security Equilibrium}
\label{subsec:ideal_security_equilibrium}

We are interested in an Ideal Security Equilibrium where all validators are online and attentive ($\pi_v = 1$), and the proposer acts honestly ($\pi_p = 1$). For this to be a Nash Equilibrium, neither party should have an incentive to unilaterally deviate.

\subsubsection{Existence Conditions}
\label{subsubsec:existence_conditions_ideal}

We derive the conditions by checking each player's best response under the assumption that the other player is adhering to the ideal strategy.

\paragraph{Condition 1: Validator's Best Response to an Honest Proposer}
Given the proposer is honest ($\pi_p = 1$), a validator will choose to be Online ($\pi_v = 1$) if $U_v(\textsf{On} | \pi_p=1, \pi_v=1) \geq U_v(\textsf{Off} | \pi_p=1, \pi_v=1)$.
(Here, $\pi_v=1$ in the conditioning arguments refers to other validators also being online).

From Eq.~\ref{eq:Uv_Online}, $U_v(\textsf{On} | \pi_p=1, \pi_v=1) = f_v - c_m$.
From Eq.~\ref{eq:Uv_Offline_general}, $U_v(\textsf{Off} | \pi_p=1, \pi_v=1) = f_v - \frac{\pi_a}{N} c_{\text{off}} - (1-1)(1-1)^{N-1}C_{\text{fail}} = f_v - \frac{\pi_a}{N} c_{\text{off}}$.
Thus, the condition is $f_v - c_m \geq f_v - \frac{\pi_a}{N} c_{\text{off}}$, which simplifies to:
\begin{equation}
    c_m \leq \frac{\pi_a}{N} c_{\text{off}} 
    \label{eq:condition_c1}
\end{equation}

\paragraph{Condition 2: Proposer's Best Response to Attentive Validators}
Given all validators are online ($\pi_v = 1$), the proposer will choose to be Honest ($\pi_p = 1$) if $U_p(\textsf{H}) \geq U_p(\textsf{F} | \pi_v=1)$.
From Eq.~\ref{eq:Up_Honest}, $U_p(\textsf{H}) = f_p$.
From Eq.~\ref{eq:Up_Fraud}, $U_p(\textsf{F} | \pi_v=1) = (1 - (1-1)^N)(-C_{\text{fraud}}) + (1-1)^N(f_p + R_{\text{fraud}}) = -C_{\text{fraud}}$.
Thus, we get the below condition which is always true:
\begin{equation}
    f_p \geq -C_{\text{fraud}}
\end{equation}

\begin{theorem}[Existence of Ideal Security Equilibrium]
\label{thm:ideal_security}
The Ideal Security Equilibrium, where all validators are attentive ($\pi_v = 1$) and the proposer is honest ($\pi_p = 1$), exists if and only if the following condition:
\begin{equation}
    c_m \leq \frac{\pi_a}{N} c_{\text{off}}
\end{equation} \label{eq:ideal_security}
\end{theorem}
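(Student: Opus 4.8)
The plan is to certify $(\pi_v=1,\pi_p=1)$ as a Nash equilibrium by the standard no-profitable-deviation criterion, and to show that the two resulting incentive constraints collapse to the single stated inequality. Because Proposition~\ref{prop:symmetric_validator_strategy} restricts attention to symmetric profiles, I would first use it to reduce the $N$-player situation to two best-response checks: one representative validator reacting to the aggregate $\bar{\pi}_v=\pi_v$ and to the proposer, and the proposer reacting to the common $\pi_v$. The key structural observation I would lean on is that each player's expected utility ($E[U_v]$ in Eq.~\ref{eq:Ev_general} and $E[U_p]$ in Eq.~\ref{eq:Ep_general}) is affine in that player's own mixing probability, so every best response is a corner of $[0,1]$; hence it suffices to compare the two pure alternatives rather than solve interior first-order conditions.

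For the validator's constraint I would fix $\pi_p=1$ and all other validators online ($\pi_v=1$), and compare $U_v(\textsf{On})$ with $U_v(\textsf{Off})$. Substituting these values into Eqs.~\ref{eq:Uv_Online} and \ref{eq:Uv_Offline_general}, the fraud-reward term $(1-\pi_p)r_v^*$ and the system-failure term $(1-\pi_p)(1-\pi_v)^{N-1}C_{\text{fail}}$ both vanish, leaving $f_v-c_m \geq f_v-\tfrac{\pi_a}{N}c_{\text{off}}$, which rearranges to exactly $c_m \leq \tfrac{\pi_a}{N}c_{\text{off}}$. For the proposer's constraint I would fix $\pi_v=1$ so that the undetected-fraud probability $(1-\pi_v)^N$ is zero; Eq.~\ref{eq:Up_Fraud} then gives $U_p(\textsf{F}\mid\pi_v=1)=-C_{\text{fraud}}$ against $U_p(\textsf{H})=f_p$, and since $f_p\geq 0 > -C_{\text{fraud}}$ honesty dominates unconditionally, placing no restriction on the parameters.

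Finally I would assemble the biconditional. For sufficiency, if $c_m \leq \tfrac{\pi_a}{N}c_{\text{off}}$ then both incentive constraints hold, so the ideal profile admits no profitable unilateral deviation and is a Nash equilibrium. For necessity, if the ideal profile is an equilibrium then in particular the validator cannot gain by going offline, which forces precisely that inequality. I expect the only genuine subtlety to be the clean justification of the corner-solution argument (so that checking two pure actions is legitimate) together with the bookkeeping confirming that the fraud- and failure-contingent terms truly cancel at the ideal profile; once those are in place the remaining algebra is routine.
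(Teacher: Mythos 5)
Your proposal is correct and follows essentially the same route as the paper: check the validator's deviation against an honest proposer (yielding $c_m \leq \frac{\pi_a}{N} c_{\text{off}}$) and the proposer's deviation against fully attentive validators (yielding the vacuous $f_p \geq -C_{\text{fraud}}$), then note only the first constraint binds. Your explicit affine-utility/corner-solution justification for reducing to pure deviations is a minor addition the paper handles implicitly via its later derivative-based stability analysis, but it is not a different approach.
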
 

\begin{observationbox} \label{observation1}
\textbf{Observation 1.}
The primary condition for Ideal Security reveals a crucial insight: Validator's attentiveness in an honest environment is driven not by direct rewards ($f_v, r_v$), but by the economic trade-off involving operational costs ($c_m$) versus the risk and penalty of failing an Attention Test ($\pi_a, c_{\text{off}}, N$). Consequently, careful calibration of the RAT mechanism's parameters is crucial for guiding the system towards this desired equilibrium.
\end{observationbox}
\vspace{0.5em}

\subsubsection{Stability Analysis for Ideal Security}
\label{subsubsec:stability_ideal}

We now examine whether players have an incentive to deviate from the pure strategies ($\pi_v=1, \pi_p=1$) that constitute the Ideal Security Equilibrium, assuming the conditions from Theorem~\ref{thm:ideal_security} are met.

\paragraph{Validator's Incentive to Maintain $\pi_v=1$ (given $\pi_p=1$)}
The validator's expected utility when the proposer is honest ($\pi_p=1$), as a function of their own online probability $\pi_v$, is 
\begin{equation}
    E[U_v](\pi_v | \pi_p=1) = \pi_v U_v(O | \pi_p=1, \pi_v) + (1-\pi_v) U_v(\textsf{Off} | \pi_p=1, \pi_v).
\end{equation}
Substituting the conditional utilities for $\pi_p=1$, we get
\begin{equation}
    E[U_v](\pi_v | \pi_p=1) = \pi_v (f_v - c_m) + (1-\pi_v) (f_v - \frac{\pi_a}{N} c_{\text{off}}) .
\end{equation}
Taking the derivative with respect to $\pi_v$:
\begin{equation}
    \frac{d E[U_v]}{d \pi_v} \bigg|_{\pi_p=1} = (f_v - c_m) - (f_v - \frac{\pi_a}{N} c_{\text{off}}) = \frac{\pi_a}{N} c_{\text{off}} - c_m
    \label{eq:derivative_Ev}
\end{equation}
If the condition for the ideal security equilibrium in Theorem \ref{thm:ideal_security} holds:
\begin{itemize}
    \item If $c_m < \frac{\pi_a}{N} c_{\text{off}}$, then $\frac{d E[U_v]}{d \pi_v} > 0$. This implies that $\pi_v=1$ is the unique optimal strategy.
    \item If $c_m = \frac{\pi_a}{N} c_{\text{off}}$, then $\frac{d E[U_v]}{d \pi_v} = 0$. The validator is indifferent.
\end{itemize}

\paragraph{Proposer's Incentive to Maintain $\pi_p=1$ (given $\pi_v=1$)}
The proposer's expected utility when all validators are online ($\pi_v=1$), as a function of their own honesty probability $\pi_p$, is $E[U_p](\pi_p | \pi_v=1) = \pi_p f_p + (1-\pi_p) (-C_{\text{fraud}})$.
Taking the derivative with respect to $\pi_p$:
\begin{equation}
    \frac{d E[U_p]}{d \pi_p} \bigg|_{\pi_v=1} = f_p - (-C_{\text{fraud}}) = f_p + C_{\text{fraud}}
    \label{eq:derivative_Ep}
\end{equation}
Given Condition C2 ($f_p \geq -C_{\text{fraud}}$), and standard assumptions $f_p \geq 0, C_{\text{fraud}} > 0$, this derivative is typically strictly positive, making $\pi_p=1$ the unique optimal strategy.

Thus, when the conditions for the Ideal Security Equilibrium are (strictly) met, both players have a clear incentive to adhere to their pure strategies.

\subsection{Robustness to Exploited Weak Randomness}
\label{subsec:robustness_exploited_randomness}

We now consider a more adversarial scenario in which a malicious proposer exploits weaknesses in RAT’s randomness generation. First, we discuss the motivation to manipulate the RAT trigger (to cause or suppress a test) and to bias validator designation. Second, based on this discussion, we show that manipulating trigger suppression under a fraudulent state does not affect the security equilibrium.

\subsubsection{Motivation to Exploit Weak Randomness}
\label{sec:motivation-weak-rand}

We analyze how a malicious proposer might exploit weak randomness along two axes:
(i) whether to cause an attention test (Cause Trigger) or suppress it (Suppress trigger) and
(ii) how to bias the target selection among validators (Targeted designation, Exclusionary designation, or No selection bias).
Table~\ref{tab:weak-rand-motivation} summarizes the resulting $2\times 3$ cases and our assessment of the attacker's motivation in each case. We further assume a non-zero manipulation cost for biasing the trigger/selection probabilities via grinding—e.g. paying a bribe to an L1 block builder.

\paragraph{Deliberately causing a trigger.}
Forcing a test yields no upside for the proposer.
If the attacker targets a colluded validator, the test still cannot be passed under a fraudulent root, so exposure to detection and dispute only increases.
Excluding a specific honest validator only marginally increases the colluder’s selection probability and does not materially lower detection. Moreover, under a fraudulent root the colluded validator cannot pass the test, so expected benefit remains non-positive after grinding cost.
Causing a trigger without any selection bias offers no advantage while further raising detection risk.
Accordingly, the attacker’s incentive to cause a RAT trigger is nonexistent or, at best, \emph{weak}.

\paragraph{Deliberately suppressing a trigger.}
When a test is successfully suppressed, who would have been selected becomes irrelevant.
By contrast, suppressing the test itself directly reduces the chance of a forced on-chain challenge, thus offering a clear incentive to the attacker—subject to the feasibility and cost of grinding.

Across all cases, the only first-order benefit arises from suppressing the trigger RAT. Manipulating designation is either irrelevant or yields negligible gain conditional on a trigger.
This observation motivates the next subsection, where we formalize trigger-suppression and analyze its impact to the security equilibrium.

\newcolumntype{M}{>{\raggedright\arraybackslash}p{0.08\textwidth}} %
\newcolumntype{D}{>{\raggedright\arraybackslash}X}                 %

\begin{table}[t]
\centering
\small
\setlength{\tabcolsep}{6pt}
\begin{tabularx}{\textwidth}{l M D M D M D}
\toprule
 & \multicolumn{2}{c}{\textbf{Targeted designation}} 
 & \multicolumn{2}{c}{\textbf{Exclusionary designation}} 
 & \multicolumn{2}{c}{\textbf{No selection bias}} \\
\cmidrule(lr){2-3}\cmidrule(lr){4-5}\cmidrule(lr){6-7}
& \textbf{Motivation} & \multicolumn{1}{c}{\textbf{Rationale}}
& \textbf{Motivation} & \multicolumn{1}{c}{\textbf{Rationale}}
& \textbf{Motivation} & \multicolumn{1}{c}{\textbf{Rationale}} \\
\midrule
\textbf{\begin{tabular}[c]{@{}l@{}}Cause\\ trigger\end{tabular}} 
& None 
& Colluded validator cannot pass under fraud. 
& Weak 
& Excluding one honest validator neither ensures colluder\footnotemark selection nor reduces detection.
& None 
& Higher detection risk. \\
\midrule
\textbf{\begin{tabular}[c]{@{}l@{}}Suppress\\ trigger\end{tabular}}
& None 
& No test 
& None 
& No test
& Clear 
& No test prevents a forced challenge. \\
\bottomrule
\end{tabularx}
\caption{Motivation and rationale for exploiting weak randomness by a malicious proposer.}
\label{tab:weak-rand-motivation}
\end{table}

\footnotetext{If selected, the colluder still fails under a fraudulent root.}

\subsubsection{Analysis under Proposer's RAT Evasion}
\label{subsubsec:analysis_rat_evasion}

As presented in Table~\ref{tab:modified_payoff_ptructure}, the core modification is that if the proposer is fraudulent ($1-\pi_p$), the effective attention test probability $\pi_a$ becomes 0 for that instance concerning RAT penalties for the validator.

\paragraph{Validator's Expected Utility (re-evaluation for Ideal Equilibrium Check)}
When checking the validator's best response to an honest proposer ($\pi_p=1$), the proposer is, by definition, honest. Thus, the assumption about RAT evasion during fraud does not alter the validator's utility calculation in this specific step.
$U_v(\textsf{On} | \pi_p=1, \pi_v=1) = f_v - c_m$.
$U_v(\textsf{Off} | \pi_p=1, \pi_v=1) = f_v - \frac{\pi_a}{N} c_{\text{off}}$ (since proposer is honest, $\pi_a$ applies).
The condition $c_m \leq \frac{\pi_a}{N} c_{\text{off}}$ remains the same.

\paragraph{Proposer's Expected Utility (re-evaluation for Ideal Equilibrium Check)}
When checking the proposer's best response to attentive validators ($\pi_v=1$), if the proposer considers being fraudulent, they know they can evade RAT. However, attentive validators will still detect the fraud through their own verification, leading to the penalty $-C_{\text{fraud}}$. The evasion of RAT does not change the outcome of fraud detection by diligent validators.
$U_p(\textsf{H}) = f_p$.
$U_p(\textsf{F} | \pi_v=1 \text{, RAT evasion during fraud}) = -C_{\text{fraud}}$.
The condition $f_p \geq -C_{\text{fraud}}$ remains the same.

\begin{corollary}[Robustness of Ideal Security Equilibrium under Weak Randomness]
\label{cor:robustness_ideal_security_evasion}
The conditions for the Ideal Security Equilibrium stated in Theorem~\ref{thm:ideal_security} ($c_m \leq \frac{\pi_a}{N} c_{\text{off}}$ and $f_p \geq -C_{\text{fraud}}$) remain sufficient even if a malicious proposer can fully evade the Randomized Attention Test when submitting a fraudulent state root.
\end{corollary}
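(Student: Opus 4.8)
The plan is to show that the two best-response inequalities underlying Theorem~\ref{thm:ideal_security} are recovered verbatim under the stronger adversary, so that the ideal profile $(\pi_v=1,\pi_p=1)$ remains a Nash equilibrium whenever $c_m \leq \frac{\pi_a}{N} c_{\text{off}}$ and $f_p \geq -C_{\text{fraud}}$ hold. Since the corollary claims only \emph{sufficiency}, it suffices to verify that, under the modified payoff structure of Table~\ref{tab:modified_payoff_ptructure}, neither player gains by a unilateral deviation from the ideal strategy. I would organize the argument as the same two deviation checks used in Section~\ref{subsubsec:existence_conditions_ideal}, and argue for each that the RAT-evasion modification leaves the relevant utilities untouched.

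First I would handle the validator's deviation check. Here the proposer is held at $\pi_p=1$, so by hypothesis the proposer never submits a fraudulent root; consequently the evasion capability---which by construction activates only on fraudulent submissions---is never exercised. The cells of Table~\ref{tab:modified_payoff_ptructure} that differ from the baseline all lie in the fraud column, so they are off the equilibrium path for this check. Therefore $U_v(\textsf{On}\mid \pi_p=1,\pi_v=1)$ and $U_v(\textsf{Off}\mid \pi_p=1,\pi_v=1)$ coincide with their values in Eq.~\ref{eq:Uv_Online} and Eq.~\ref{eq:Uv_Offline_general}, and the validator's no-deviation inequality collapses to exactly $c_m \leq \frac{\pi_a}{N} c_{\text{off}}$.

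Next I would handle the proposer's deviation check, which is where the evasion must genuinely be accounted for. Holding the validators at $\pi_v=1$, I would recompute $U_p(\textsf{F}\mid \pi_v=1)$ using the modified table, where a fraudulent submission triggers no attention test. The crucial structural fact---recorded in the fraud-detection assumption of Section~\ref{sec:payoff_preliminaries}---is that online validators detect fraud through their own independent re-execution, not through the attention test; the test only certifies liveness. Hence the proposer's fraud payoff is still governed by the probability $1-(1-\pi_v)^N$ of at least one online detector, which equals $1$ at $\pi_v=1$. Evasion suppresses only the (already equilibrium-irrelevant) RAT penalty channel and cannot touch this detection channel, so $U_p(\textsf{F}\mid \pi_v=1)=-C_{\text{fraud}}$ exactly as in Eq.~\ref{eq:Up_Fraud} evaluated at $\pi_v=1$, and the no-deviation inequality reduces to $f_p \geq -C_{\text{fraud}}$.

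With both inequalities identical to those of Theorem~\ref{thm:ideal_security}, sufficiency transfers directly and the corollary follows. The main obstacle is conceptual rather than computational: one must argue cleanly that the attention-test channel and the fraud-detection channel are orthogonal, so that an adversary who neutralizes the former gains nothing against the latter. The only way this could fail would be if evading RAT altered the detection probability or the reward split $r_v^*$; I would therefore make explicit that at the ideal profile every modified table entry that is actually reached is identical to the baseline, confirming that the evasion is strategically inert where it matters.
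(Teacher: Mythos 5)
Your proof is correct and takes essentially the same route as the paper: the same two best-response checks, observing that RAT evasion is off the equilibrium path when the proposer is held honest (validator check), and that attentive validators still detect fraud through their own re-execution independently of the attention test, so $U_p(\textsf{F}\mid\pi_v=1)=-C_{\text{fraud}}$ is unchanged (proposer check). Your added remark that the attention-test channel and the fraud-detection channel are orthogonal is a slightly more explicit statement of the paper's own reasoning, not a different argument.
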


\vspace{0.5em}

\begin{observationbox} \label{observation2}
\textbf{Observation 2.}
The randomness source for RAT, potentially derived from L1 block data or proposer inputs, might be exploitable. However, even if a malicious proposer successfully evades an attention test \textit{when submitting a fraudulent state}, this exploitation does not alter the conditions required to achieve the Ideal Security Equilibrium. This is because the validator's incentive to be attentive and the proposer's incentive to be honest remain intact.
\end{observationbox}

\vspace{0.5em}

\begin{table}[tb]
\centering
\caption{State-Dependent Payoff Structure for Validator and Advanced Proposer Model}
\label{tab:modified_payoff_ptructure}
\begin{adjustbox}{width=\textwidth, center} %
\begin{tabular}{ccccc}
\toprule
\multicolumn{3}{c}{\textbf{Validator Network State}} & \multicolumn{2}{c}{\textbf{Proposer Action}} \\
\cmidrule(lr){1-3} \cmidrule(lr){4-5}
Network Status & Validator Status & Attention Test  & Honest (\textsf{H}) & Fraud (\textsf{F}) \\
\midrule
\multirow{4}{*}{Online} & \multirow{2}{*}{Online (\textsf{On})} & \textsc{Pass} & ($f_v-c_v$, $f_p$) & - \\
& & \textsc{No test} & ($f_v-c_v$, $f_p$) & ($f_v-c_v+r^{*}_v$, $-C_{\text{fraud}}$) \\
\cmidrule(lr){2-5}
& \multirow{2}{*}{Offline (\textsf{Off})} & \textsc{Timeout} & ($f_v-c_v-c_{\text{off}}$, $f_p$) & - \\
& & \textsc{No test} & ($f_v-c_v$, $f_p$) & ($f_v-c_v$, $-C_{\text{fraud}}$) \\
\midrule
\multirow{2}{*}{Offline} & \multirow{2}{*}{Offline (\textsf{Off})} & \textsc{Timeout} & ($f_v-c_v-c_{\text{off}}$, $f_p$) & - \\
& & \textsc{No test} & ($f_v-c_v$, $f_p$) & ($-C_{\text{fail}}$, $f_p+R_{\text{fraud}}$) \\
\bottomrule
\end{tabular}
\end{adjustbox}
\end{table}

\section{Design for Ideal Security} \label{sec:design_ideal_security}

A key objective for system designers is to foster the \textbf{Ideal Security Equilibrium}, where all validators are attentive ($\pi_v^*=1$) and the Proposer remains honest ($\pi_p^*=0$). This section discusses how system parameters, particularly the attention test probability ($\pi_a$) and the penalty for failing an attention test ($C_{\text{off}}$), can be tuned to achieve this desirable state while considering practical operational constraints.

\subsection{Consideration of Parameter Interplay}
From Theorem \ref{thm:ideal_security}, the condition for the Ideal Security Pure Strategy Nash Equilibrium is:
\begin{equation} \label{eq:ideal_security_condition_revisited}
c_m \le \frac{\pi_a C_{\text{off}}}{N}
\end{equation}
where $c_m$ is the marginal cost of validator attentiveness, $\pi_a$ is the system-wide probability of an attention test being triggered per epoch, $C_{\text{off}}$ is the penalty for a validator failing such a test (if selected), and $N$ is the number of validators.

In a practical rollup deployment, $c_m$ is largely determined by the rollup's transaction throughput, state complexity, and the operational costs of validator infrastructure. Similarly, $N$ is often an emergent property reflecting the desired level of decentralization, or a target set by the system. Therefore, the primary levers for system designers to satisfy Condition \eqref{eq:ideal_security_condition_revisited} are $\pi_a$ and $C_{\text{off}}$.

A higher $\pi_a$ increases the likelihood of catching inattentive validators but also incurs greater L1 transaction overhead due to more frequent RAT interactions. Conversely, a higher $C_{\text{off}}$ provides a stronger deterrent against inattentiveness, potentially allowing for a lower $\pi_a$. The maximum credible value for $C_{\text{off}}$ is practically bounded by the validator's staked deposit, $D_V$ (i.e., $C_{\text{off}} \le D_V$). To minimize system overhead (by minimizing $\pi_a$), it is rational to set $C_{\text{off}}$ to its maximum credible value, $D_V$.
Rearranging Condition \eqref{eq:ideal_security_condition_revisited} to solve for the minimum required $\pi_a$, assuming $C_{\text{off}} = D_V$, we get:
\begin{equation} \label{eq:min_pi_a_revisited}
\pi_a \ge \frac{c_m N}{D_V}
\end{equation}
This equation forms the basis for our parameter tuning strategy.

\subsection{Estimating Proper Attention Test Frequency} \label{sec:estimate_cm_simulate_pia}
To ground our analysis and understand the practical range for $\pi_a$, we first need to estimate a representative marginal cost of attentiveness ($c_m$) per epoch. The value $c_m$ represents the per-epoch operational cost that a lazy validator might attempt to shirk by not performing its duties (e.g., not re-executing L2 transactions, not maintaining readiness for L1 interactions).

\subsubsection{Estimating Marginal Cost of Attentiveness}
We can approximate $c_m$ by considering the publicly available data regarding the monthly operational costs of running validator nodes for various ORUs. These costs typically encompass server instance specifications (CPU, RAM) and storage requirements. For instance, Table~\ref{tab:validator_costs} outlines the required specifications and estimated operational costs for full nodes of three representative ORUs.

\begin{table}[htbp]
\centering
\caption{Estimated Monthly Operational Costs for Validator Nodes}
\label{tab:validator_costs}
\begin{adjustbox}{width=\textwidth, center} %
\begin{tabular}{l c c c c c c}
\toprule
\multicolumn{1}{c}{Rollup} & \multicolumn{1}{c}{Official Min. Specs} & \multicolumn{1}{c}{\begin{tabular}[c]{@{}c@{}}AWS Instance\\ Example\end{tabular}} & \multicolumn{1}{c}{\begin{tabular}[c]{@{}c@{}}SSD Storage\\ (Min.)\end{tabular}} & \multicolumn{1}{c}{\begin{tabular}[c]{@{}c@{}}Instance\\ Cost/Mo\end{tabular}} & \multicolumn{1}{c}{\begin{tabular}[c]{@{}c@{}}Storage\\ Cost/Mo\end{tabular}} & \multicolumn{1}{c}{\begin{tabular}[c]{@{}c@{}}Total\\ Cost/Mo\end{tabular}} \\ 
\midrule
OP Mainnet \cite{optimism_node} & 16GB RAM, Latest CPU, & r5.xlarge          & 1.6TB & \$140 & \$128 & \$268 \\
           & 1.6TB SSD             & (4vCPU/32GB)       &       &       &       &       \\
\addlinespace
Base \cite{base_node}       & 32GB RAM (64GB rec.), & r5.2xlarge         & 3TB   & \$280 & \$240 & \$520 \\
           & Multi-core, NVMe,     & (8vCPU/64GB)       &       &       &       &       \\
           & 2\textasciitilde3TB+ SSD &                    &       &       &       &       \\
\addlinespace
Arbitrum \cite{arbitrum_node}   & 16GB RAM, 4-core, NVMe, & t3.xlarge          & 1TB   & \$140 & \$80  & \$220 \\
           & (t3.xlarge rec.)      & (4vCPU/16GB)       &       &       &       &       \\
\bottomrule
\end{tabular}
\end{adjustbox}
\end{table}

Based on these figures, it is reasonable to estimate that the monthly operational cost for a validator node performing diligent verification could broadly range from approximately \$200 to 
\$600.
Given this potential range, and adopting a deliberately pessimistic (or perhaps, realistically cautious) stance on operational expenditures for Optimistic Rollups, we set a representative monthly cost of \textbf{\$600 per validator instance} for our analysis. This higher estimate robustly covers the active validation tasks a lazy validator might avoid.

To convert this monthly cost into a per-epoch marginal cost $c_m$, we assume that an L2 state commitment (and thus a potential RAT challenge window or epoch) occurs every hour, given that three representative optimistic rollups---Arbitrum One, Optimism Mainnet, and Base---post their state root approximately once per hour \cite{arbitrum_assertioncreated, optimism_state_root_post, base_state_root_post}. The number of such epochs in a month (approximating a month as 30 days) is:
$$ \frac{30 \text{ days/month} \times 24 \text{ hours/day} }{1 \text{ hour/epoch}} = 720 \text{ epochs/month}. $$
Therefore, the estimated marginal cost of attentiveness per epoch ($c_m$) is:
$$ c_m \approx \frac{\$200 \text{ per month}}{720 \text{ epochs/month}} \approx \$0.833 \text{ per epoch}. $$
This estimated $c_m$ will be used in the subsequent numerical analysis to determine the required attention test probability $\pi_a$.

\subsubsection{Numerical Analysis on Attention Test Probability}

Using the per-epoch marginal operating cost estimated in the previous subsection, \(c_m \approx 0.833\), Figure~\ref{fig:rat_numeric} quantitatively illustrates the relationship among the trigger probability \(\pi_a\), validator set size \(N\), and offline penalty \(C_{\text{off}}\) at the boundary implied by the ideal security equilibrium \(c_m \le \pi_a C_{\text{off}}/N\). It plots the boundary curves \(C_{\text{off}} = \frac{N c_m}{\pi_a}\) in the \((\pi_a, C_{\text{off}})\) plane for \(N \in \{10,25,50,100\}\). The horizontal axis uses a logarithmic scale to cover a wide range of per-epoch trigger probabilities, while the vertical axis shows the offline penalty in linear scale. As \(N\) grows, the curves shift upward, indicating that a larger validator set requires a higher penalty to sustain the same trigger probability; conversely, increasing \(C_{\text{off}}\) reduces the required \(\pi_a\) in inverse proportion.

While increasing \(\pi_a\) raises operating costs (the detailed cost analysis follows in the next section), choosing \(\pi_a\) too small provides insufficient statistical power to verify that RAT is functioning as intended. In practice, we therefore adopt a simple heuristic anchor at \(\pi_a = 1\%\) and check the implied penalty from Figure~\ref{fig:rat_numeric}. Substituting \(\pi_a=0.01\) yields \(C_{\text{off}} = 100\,N\,c_m\) with \(c_m=0.833\), the resulting thresholds are approximately \(\$830\) for \(N=10\), \(\$2080\) for \(N=25\), \(\$4170\) for \(N=50\), and \(\$8330\) for \(N=100\) approximately. In deployment, it is prudent to set \(C_{\text{off}}\) slightly above this 1\% anchor to accommodate estimation error and stochastic variability, while keeping \(\pi_a\) within the operational budget discussed next. 

\begin{figure}[htbp]
    \centering
    \begin{minipage}{0.47\textwidth}
        \centering
        \includegraphics[width=\linewidth]{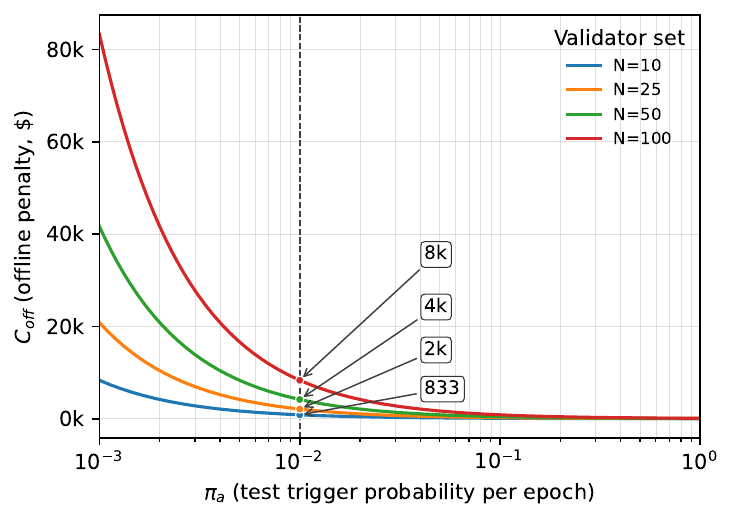}
        \caption{Numerical analysis of the minimum attention test trigger probability required to satisfy the Ideal Security Equilibrium condition (Equation \ref{thm:ideal_security}), across different validator counts and penalty scale.}
        \label{fig:rat_numeric}
    \end{minipage}
    \hfill
    \begin{minipage}{0.512\textwidth}
        \centering
        \includegraphics[width=\linewidth]{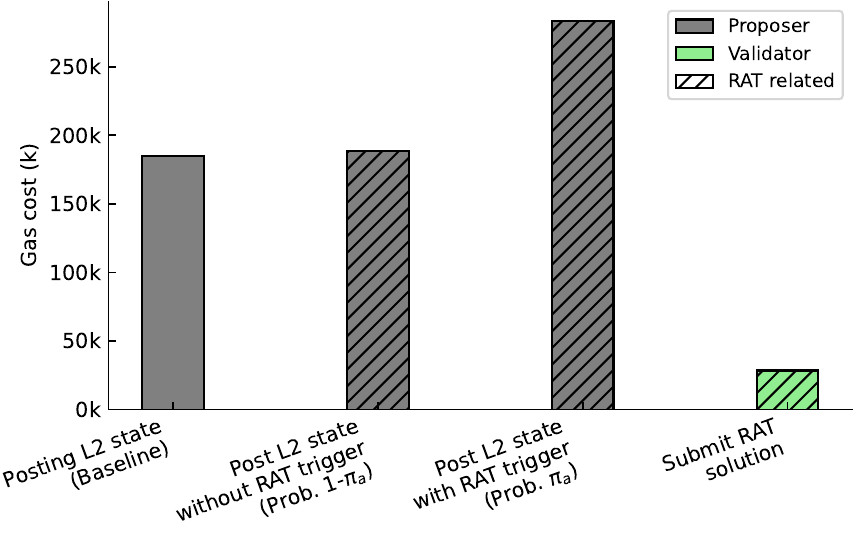}
        \caption{Gas cost comparison with and without the Randomized Attention Test (RAT) deployment in the Optimism Stack. Baseline costs correspond to posting L2 state without the RAT deployment.}
        \label{fig:rat_gas}
    \end{minipage}
\end{figure}

\section{Implementation \& Cost Analysis}
\label{sec:implementation}

In this section, we briefly introduce the Proof-of-Concept (PoC) implementation of RAT using Optimism. And we analyze the expected cost overhead by the RAT deployment to optimistic rollups.

\subsection{Implementation}
Our PoC is built on a fork of Optimism Bedrock \cite{ethereum-optimism-optimism}, which is the codebase of representative optimistic rollups including Optimism Mainnet, and compiled with Solidity~0.8.15 (optimizer enabled, EVM version \texttt{cancun}). We use Foundry for testing and gas profiling via \texttt{gasleft()}. See Appendix~\ref{appendix:code} for the PoC code and test scripts. 

Note that in Optimism, \texttt{DisputeGameFactory.create()}  serves as the \emph{entry point that posts the L2 state root} while deploying and initializing a game instance; it is not merely a dispute-only routine as its name might suggest. RAT integrates into this entry point via a conditional branch: \texttt{shouldTriggerRAT()} is evaluated once per creation, and on success \texttt{triggerAttentionTest()} is invoked as a contract call. If the probability check fails, execution returns early with a minimal overhead. If it passes and a valid challenger exists, the path includes randomized selection, bond handling, state writes, and event emission. Since RAT is called internally, the 21{,}000 intrinsic gas applies only once to the outer transaction.

Algorithm~\ref{alg:PoC} presents a pseudo code that illustrates the implemented flow.
On top of the baseline creation path, we derive a randomness seed
from the previous block hash, and timestamp. Both of them provide weak randomness and can be controlled by L1 block builders.
This seed is fed to both the probabilistic trigger and the challenger selection.
After a target validator is selected, we adopt an \emph{upfront slashing} policy: once an attention test is triggered, the validator’s bond is slashed immediately
and refunded only if the test is passed or the validator joins the dispute game.
We did not choose an ex post slashing scheme because it requires an extra party to call for declaring the test failure of the validator and it creates responsibility ambiguity. Our design incurs a small predictable upfront cost
to remove that follow-up call entirely.

\begin{algorithm}[t]
\caption{\textsc{CreateGame}: baseline in \textbf{black}, RAT-only lines in \textbf{\textcolor{rat}{blue}}}
\label{alg:PoC}
\KwIn{$gameType$, $stateRoot$, $L2BlockNumber$}
\KwOut{$game$}

$game \leftarrow \mathrm{NewGameInstance}(gameType)$\;
$game.\mathrm{initialize}(stateRoot,\, L2BlockNumber)$\;
$\mathrm{RegisterGame}(game,\, stateRoot,\, gameType)$\;
$\mathrm{EmitGameCreated}(game,\, stateRoot,\, gameType)$\;

\RAT{$prevHash \leftarrow \mathrm{blockhash}(\mathrm{block.number}-1)$}\;
\RAT{$ts \leftarrow \mathrm{block.timestamp}$}\;
\RAT{$seed \leftarrow H(prevHash \Vert ts)$}\;

\RAT{\If{$\mathrm{shouldTriggerRAT}(seed)$}{
  \RAT{$candidate \leftarrow \mathrm{RAT.pickTargetValidator}(seed)$\;}
  \RAT{\If{$candidate.\mathrm{exists}()$}{
    \RAT{$\mathrm{RAT.slashBond}(candidate,\, perTestBondAmount)$ \tcp*{upfront slashing; refunded only if test passes or a legitimate dispute proceeds}}
    \RAT{$\mathrm{RAT.recordAttention}(game:\{ candidate,\, stateRoot,\, bondAmount,\, L1BlockNumber\})$\;}
    \RAT{$\mathrm{RAT.emitTriggered}(stateRoot,\, L2BlockNumber,\, candidate)$\;}
  }}
}}

\textbf{return} $game$\;
\end{algorithm}

\subsection{Cost Analysis}
Figure~\ref{fig:rat_gas} overviews the gas cost for the transactions in RAT and the baseline transaction which posts a L2 state. For the proposer without RAT, \texttt{create()} uses roughly 184{,}991 gas which is considered baseline. With RAT deployed but not triggered (probability check only), the total rises to about 188{,}695 gas (an overhead of $\approx$3704). When the trigger fires and a valid challenger exists, the total reaches about 283{,}366 gas, i.e., an additional $\approx$98375 over baseline. 
On the validator side, a successful \texttt{submitCorrectEvidence()}, which passes the test with the correct Merkle tree values for the posted state root, call costs about 28520 gas. 

\subsubsection{Operational Overhead}
We set the RAT trigger probability to \(p=0.01\) (i.e., 1\%), as discussed in Section~\ref{sec:design_ideal_security}, an appropriate bond schedule forces a security equilibrium in which rational participants adhere to a \(1\%\) triggering policy. Assuming RAT executes on Ethereum L1, fees are computed under EIP-1559: \(\mathrm{fee}=\mathrm{gasUsed}\times(\mathrm{base\ fee}+\mathrm{priority\ fee})\).

We decompose expected overhead per proposer transaction into (i) the proposer-side increment, arising from conditional RAT execution inside \texttt{create()}, (ii) the validator-side increment, counting exactly one \texttt{submitCorrectEvidence} per trigger under equilibrium, and (iii) the network total as their sum. Using the Ethereum daily average gas series and taking the Jan–Aug 2025 simple mean \(\bar g\approx 2.3\,\mathrm{gwei}\), and the contemporaneous Jan–Aug 2025 average ETH/USD rate \(\overline{\mathrm{ETHUSD}}\approx \$2753\), we obtain:
\[
\begin{aligned}
\mathbb{E}[\Delta \mathrm{gas}_{\text{prop}}]&=(1-p)\cdot3704+p\cdot98375=\mathbf{4650.71},\\
\mathbb{E}[\Delta \mathrm{gas}_{\text{val}}]&=p\cdot28520=\mathbf{285.20},\\
\mathbb{E}[\Delta \mathrm{gas}_{\text{net}}]&=\mathbf{4935.91}.
\end{aligned}
\]
Converting gas to monthly ETH by \(\mathrm{ETH}=\frac{720 \text{ epochs}}{1 \text{ month}} \times \mathrm{gas}\times \bar g\times10^{-9}\) and then to USD via \(\overline{\mathrm{ETHUSD}}\) yields:
\[
\begin{aligned}
\text{Proposer: } & \mathbf{7.70\times10^{-3}}~\mathrm{ETH}/\mathrm{month}\ \ (\approx \mathbf{\$20.88}),\\
\text{Validators: } & \mathbf{4.72\times10^{-4}}~\mathrm{ETH}/\mathrm{month}\ \ (\approx \mathbf{\$1.30}),\\
\text{Network: } & \mathbf{8.21\times10^{-3}}~\mathrm{ETH}/\mathrm{month}\ \ (\approx \mathbf{\$22.18}).
\end{aligned}
\]

While we consider that these figures scale approximately linearly with \(\bar g\) and the ETH/USD rate, the overhead introduced by RAT remains insignificant.

\section{Discussion}
\label{sec:discussion}

In this section, we discuss the main limitations of our modeling and assumptions.

\subsection{The Simplicity of the Validator Cost Function}

Our analysis utilized a linear cost function for validator attentiveness, a reasonable first-order approximation, particularly given the metered nature of cloud services commonly used by validators. However, the true cost could exhibit non-linearities, such as initial setup costs or diminishing then increasing marginal costs, potentially resembling a sigmoid function. While a comprehensive analysis of such variations is extensive, we posit that as long as the cost function remains monotonically increasing, the fundamental structure of the derived Nash Equilibria (Ideal Security, System Failure, and potential MSNE) would likely persist, with shifts primarily occurring in the specific equilibrium points and the thresholds between them. A more granular understanding of these cost structures could further refine the practical parameter tuning of RAT, although the current model provides a robust foundation for understanding the core incentive dynamics.

\subsection{Limitations of State Transition Tracking in Attention Tests}

The RAT protocol, as designed, effectively verifies a validator's ability to track L2 state transitions and compute correct state roots, which is a critical aspect of attentiveness. Nevertheless, this does not comprehensively guarantee the validator's full capability to successfully generate and submit a valid fraud proof during an actual dispute, as highlighted by incidents like the Kroma Optimistic Rollup's VM misconfiguration issue in 2024 \cite{kroma2024challenge}. This distinction underscores that RAT primarily addresses validator liveness and computational correctness for state tracking, rather than the end-to-end integrity of the fraud proof submission pipeline, which can involve complex L1 contract interactions and specific local environment configurations. While RAT enhances security by ensuring foundational diligence, future research could explore extensions or complementary mechanisms to more holistically assess a validator's readiness to complete the entire fraud proof process, balancing the desire for comprehensive assurance against increased complexity and L1 overhead.

\section{Conclusion}
\label{sec:conclusion}

This paper introduced and detailed the Randomized Attention Test (RAT) protocol, a novel mechanism specifically designed to address the verifier's dilemma concerning validator attentiveness within Optimistic Rollups. By employing probabilistic challenges, RAT proactively monitors and incentivizes validator diligence. Our game-theoretic analysis established the conditions for the Ideal Security Equilibrium, demonstrating that honest and attentive behavior from all participants can be encouraged. Significantly, this desirable state is shown to be practically achievable with modest economic penalties for non-compliance and a low challenge probability, thereby minimizing L1 gas overhead and operational expenses. The analysis also provided a framework for understanding the trade-offs inherent in tuning RAT's parameters. Ultimately, RAT is proposed as a vital complement to existing fraud proof mechanisms, enhancing the security posture of Optimistic Rollup systems by ensuring that validators are not just present, but actively and correctly performing their crucial verification duties. Our implementation data further demonstrate that deploying RAT introduces negligible operational overhead. This work paves the way for more robust and trustworthy optimistic rollup implementations.

\section*{Acknowledgement}

The authors thank the anonymous reviewers for their helpful comments.

\appendix

\section{Code Availability} \label{appendix:code}

The proof-of-concept implementation in Section~\ref{sec:implementation} is available at the following GitHub repository: \url{https://github.com/tokamak-network/optimism/tree//feature/rat-poc-v1}.

\bibliographystyle{ieeetr}
\bibliography{references}

\end{document}